\documentclass{article} 
\usepackage{nips15submit_e,times}

\usepackage[colorlinks,
            linkcolor=blue,
            citecolor=blue,
            urlcolor=magenta,
            linktocpage,
            plainpages=false,
            pdftex]{hyperref}
\usepackage{url,appendix}

\usepackage[cmex10]{amsmath}
\interdisplaylinepenalty=2500
\usepackage{amssymb,amsfonts,amsthm,dsfont,bbm}

\usepackage{mathtools,dsfont}
\usepackage{color,verbatim,enumitem,psfrag,bbm,comment,tabularx}

\usepackage{algorithm,algorithmic}

\usepackage[small,compact]{titlesec}

\newtheorem{lemma}{Lemma}

\newtheorem{theorem}{Theorem}

\newtheorem{infthm}{Theorem}

\def\EE{{\mathbb{E}}}
\def\PP{{\mathbb{P}}}

\newcommand{\fall}{\,\forall\,}

\def\vecE{\mathbf{\underline{e}}}
\def\p{\mathbf{q}}
\def\r{\mathbf{r}}
\def\vecS{\mathbf{\sigma}}
\def\S{\mathcal{S}}
\newcommand{\dbar}{\overline{d}}
\newcommand{\PR}{\mathbf{p}}

\newcommand{\R}{{\cal R}}

\title{Fast Bidirectional Probability Estimation in Markov Models}

\author{
Siddhartha Banerjee \thanks{Siddhartha Banerjee is an assistant professor at the School of Operations Research and Information Engineering at Cornell (\url{http://people.orie.cornell.edu/sbanerjee}).} \\
\texttt{sbanerjee@cornell.edu}
\And
Peter Lofgren\thanks{Peter Lofgren is a graduate student in the Computer Science Department at Stanford (\url{http://cs.stanford.edu/people/plofgren/}).} \\
\texttt{plofgren@cs.stanford.edu}
}

\nipsfinalcopy 
\begin{document}

\maketitle

\begin{abstract}
We develop a new bidirectional algorithm for estimating Markov chain multi-step transition probabilities: given a Markov chain, we want to estimate the probability of hitting a given target state in $\ell$ steps after starting from a given source distribution. Given the target state $t$, we use a (reverse) local power iteration to construct an `expanded target distribution', which has the same mean as the quantity we want to estimate, but a smaller variance -- this can then be sampled efficiently by a Monte Carlo algorithm. Our method extends to any Markov chain on a discrete (finite or countable) state-space, and can be extended to compute functions of multi-step transition probabilities such as PageRank, graph diffusions, hitting/return times, etc. Our main result is that in `sparse' Markov Chains -- wherein the number of transitions between states is  comparable to the number of states -- the running time of our algorithm for a uniform-random target node is \emph{order-wise smaller} than Monte Carlo and power iteration based algorithms; in particular, our method can estimate a probability $p$ using only $O(1/\sqrt{p})$ running time. 
\end{abstract}

\section{Introduction}

Markov chains are one of the workhorses of stochastic modeling, finding use across a variety of applications -- MCMC algorithms for simulation and statistical inference; to compute network centrality metrics for data mining applications; statistical physics; operations management models for reliability, inventory and supply chains, etc. 
In this paper, we consider a fundamental problem associated with Markov chains, which we refer to as the \emph{multi-step transition probability estimation} (or MSTP-estimation) problem: 
given a Markov Chain on state space $\S$ with transition matrix $P$, an initial source distribution $\vecS$ over $\S$, a target state $t\in\S$ and a fixed length $\ell$, we are interested in computing the \emph{$\ell$-step transition probability from $\vecS$ to $t$}. Formally, we want to estimate:
\begin{align}
\label{eq:MSTP}
\PR_{\vecS}^{\ell}[t]:= \langle\vecS P^{\ell},\vecE_t\rangle = \vecS P^{\ell}\vecE_t^T,
\end{align}
where $\vecE_t$ is the indicator vector of state $t$. A natural parametrization for the complexity of MSTP-estimation is in terms of the minimum transition probabilities we want to detect: given a desired minimum detection threshold $\delta$, we want algorithms that give estimates which guarantee small relative error for any $(\vecS,t,\ell)$ such that $\PR_{\vecS}^{\ell}[t]>\delta$.

Parametrizing in terms of the minimum detection threshold $\delta$ can be thought of as benchmarking against a standard Monte Carlo algorithm, which estimates $\PR_{\vecS}^{\ell}[t]$ by sampling independent $\ell$-step paths starting from states sampled from $\vecS$. An alternate technique for MSTP-estimation is based on linear algebraic iterations, in particular, the (local) power iteration. We discuss these in more detail in Section \ref{ssec:relwork}. Crucially, however, \emph{both these techniques have a running time of $\Omega(1/\delta)$} for testing if $\PR_{\vecS}^{\ell}[t]>\delta$ (cf. Section \ref{ssec:relwork}).

\subsection{Our Results}
\label{ssec:results}

To the best of our knowledge, our work gives {\em the first bidirectional algorithm for MSTP-estimation} which works for general discrete state-space Markov chains\footnote{Bidirectional estimators have been developed before for \emph{reversible} Markov chains~\cite{Goldreich2011}; our method however is not only more general, but conceptually and operationally simpler than these techniques (cf. Section \ref{ssec:relwork}).}.
The algorithm we develop is very simple, both in terms of implementation and analysis. Moreover, we prove that in many settings, it is order-wise faster than existing techniques.

Our algorithm consists of two distinct \emph{forward} and \emph{reverse} components, which are executed sequentially. In brief, the two components proceed as follows:
\begin{itemize}[nosep,leftmargin=*]
\item \textbf{Reverse-work}: Starting from the target node $t$, we perform a sequence of reverse local power iterations -- in particular, we use the REVERSE-PUSH operation defined in Algorithm \ref{alg:push}.
\item \textbf{Forward-work}:  
We next sample a number of random walks of length $\ell$, starting from $\vecS$ and transitioning according to $P$, and return the sum of residues on the walk as an estimate of $\PR_{\vecS}^{\ell}[t]$. 
\end{itemize}

This full algorithm, which we refer to as the \texttt{Bidirectional-MSTP} estimator, is formalized in Algorithm \ref{alg:MSTP}. It works for all countable-state Markov chains, giving the following accuracy result:
\begin{infthm}[For details, refer Section \ref{ssec:analysis}]
Given any Markov chain $P$, source distribution $\vecS$, terminal state $t$, length $\ell$, threshold $\delta$ and relative error $\epsilon$, \texttt{Bidirectional-MSTP} (Algorithm \ref{alg:MSTP}) returns an unbiased estimate $\widehat{\PR}_{\vecS}^{\ell}[t]$ for $\PR_{\vecS}^{\ell}[t]$,
which, with high probability, satisfies:
\begin{align*}
\left|\widehat{\PR}_{\vecS}^{\ell}[t]-\PR_{\vecS}^{\ell}[t]\right|<\max\left\{\epsilon\PR_{\vecS}^{\ell}[t],\delta\right\}.
\end{align*}
\end{infthm}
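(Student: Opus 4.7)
The plan is to prove correctness by (i) establishing an invariant preserved by REVERSE-PUSH that expresses $\PR_{\vecS}^{\ell}[t]$ as a deterministic quantity plus an expected functional of a random walk, (ii) deducing unbiasedness of the forward Monte Carlo step from this invariant, and (iii) applying a concentration inequality to bounded i.i.d.\ samples.

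For step (i), let $y_k(v) := \PP[X_k = t \mid X_0 = v]$, so that $y_0 = \vecE_t$ and $y_k = P y_{k-1}$. I would argue inductively on the sequence of REVERSE-PUSH operations that the arrays $(q_k, r_k)_{k=0}^{\ell}$ produced by the algorithm satisfy, for every state $v$ and every $k \leq \ell$,
$$y_k(v) = q_k(v) + \EE_v\!\left[\sum_{j=0}^{k} r_{k-j}(X_j)\right],$$
where the expectation is over the chain $(X_j)$ started at $v$. The invariant holds at initialization ($q \equiv 0$, $r_0 = \vecE_t$, other $r_k \equiv 0$), since the right side reduces to $\EE_v[\vecE_t(X_k)]$. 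A REVERSE-PUSH at $(v, k)$ transfers $r_k(v)$ into $q_k(v)$ and redistributes mass to $r_{k+1}$ at the in-neighbors of $v$ in a way that exactly compensates the move, preserving the identity. Averaging against $\vecS$ at $k = \ell$ then gives
$$\PR_{\vecS}^{\ell}[t] = \langle \vecS, q_\ell \rangle + \EE_{V_0 \sim \vecS}\!\left[\sum_{k=0}^{\ell} r_{\ell-k}(V_k)\right].$$

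For steps (ii) and (iii), the forward phase draws $n_f$ i.i.d.\ walks $(V_0^{(i)}, \ldots, V_\ell^{(i)})$ with $V_0^{(i)} \sim \vecS$ and returns $\widehat{\PR}_{\vecS}^{\ell}[t] = \langle \vecS, q_\ell \rangle + \tfrac{1}{n_f}\sum_i Y_i$, where $Y_i := \sum_{k=0}^{\ell} r_{\ell-k}(V_k^{(i)})$; unbiasedness is immediate from the displayed identity. Since REVERSE-PUSH is run until $\|r_k\|_\infty \leq r_{\max}$ for every $k$, each $Y_i \in [0, (\ell+1) r_{\max}]$, and a Bernstein bound on $\sum_i Y_i$ with $n_f = \widetilde{\Theta}(\ell r_{\max} / (\epsilon^2 \delta))$ (polylogarithmic factors absorbing the failure probability) yields the claimed deviation inequality: when $\PR_{\vecS}^{\ell}[t] \geq \delta$ the variance-to-mean ratio gives $\epsilon$-relative error, while otherwise the additive $\delta$-error dominates. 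I expect the main obstacle to be step (i) --- pinning down the precise backward update rule across the $k$-indexed residues and verifying that the invariant is preserved exactly --- since once that is in hand, unbiasedness is a restatement of the invariant and the concentration is a standard application of Bernstein to bounded i.i.d.\ samples.
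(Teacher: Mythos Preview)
Your proposal is correct and follows essentially the same route as the paper: the invariant you state is exactly the paper's Lemma~1 (proved by the same induction over REVERSE-PUSH operations), unbiasedness is read off from it, and concentration for bounded i.i.d.\ samples yields the deviation bound. The only cosmetic differences are that the paper samples $k$ uniformly rather than summing over $k$ (same range bound $\ell\delta_r$ on each sample) and uses multiplicative Chernoff with a two-case split on $\EE[S^\ell_{t,k}]\gtrless\delta/2e$ instead of Bernstein---both arguments rely on the nonnegativity of residuals and estimates (which you should state explicitly), but are otherwise interchangeable.
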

Since we dynamically adjust the number of REVERSE-PUSH operations to ensure that all residues are small, the proof of the above theorem follows from straightforward concentration bounds. 

Since \texttt{Bidirectional-MSTP} combines local power iteration and Monte Carlo techniques, a natural question is when the algorithm is faster than both. 
It is easy to to construct scenarios where the runtime of \texttt{Bidirectional-MSTP} is comparable to its two constituent algorithms -- for example, if $t$ has more than $1/\delta$ in-neighbors. 
Surprisingly, however, we show that in \emph{sparse Markov chains} and for \emph{typical target states}, \texttt{Bidirectional-MSTP} is order-wise faster:

\begin{infthm}[For details, refer Section \ref{ssec:analysis}]
Given any Markov chain $P$, source distribution $\vecS$, length $\ell$, threshold $\delta$ and desired accuracy $\epsilon$; then for a uniform random choice of $t\in\S$, the \texttt{Bidirectional-MSTP} algorithm has a running time of $\widetilde{O}(\ell^{3/2}\sqrt{\dbar/\delta})$, where $\dbar$ is the average number of neighbors of nodes in $\S$.
\end{infthm}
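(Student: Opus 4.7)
My plan is to decouple the total running time into the reverse-push work $T_r$ (to build the expanded target distribution) and the forward sampling work $T_f$ (to estimate by random walks), parametrize both by the residue threshold $r_{\max}$ used in the reverse phase, and optimize $r_{\max}$ at the end.

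For the forward phase, the key observation is that after reverse-pushing with threshold $r_{\max}$, every length-$\ell$ sample walk contributes a per-walk estimator bounded by $\ell \cdot r_{\max}$, since it sums at most $\ell$ residues each at most $r_{\max}$. Combined with $\EE[\widehat{\PR}_{\vecS}^{\ell}[t]] = \PR_{\vecS}^{\ell}[t] \geq \delta$ (the relevant regime), a standard Chernoff--Bernstein concentration bound then implies that $w = \widetilde{O}(\ell r_{\max}/(\epsilon^2 \delta))$ walks suffice for relative error $\epsilon$ with high probability. Since each walk costs $\ell$ simulation steps, $T_f = \widetilde{O}(\ell^2 r_{\max}/(\epsilon^2 \delta))$.

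The harder task is bounding $T_r$ in expectation over a uniformly random target. For a fixed target $t$, each push at node $v$ in level $i$ costs $O(d_{\text{in}}(v))$ and consumes residue at least $r_{\max}$, so a standard amortization yields $T_r(t) \leq r_{\max}^{-1} \sum_{i,v} d_{\text{in}}(v) \cdot r_{v,i,t}$, where $r_{v,i,t}$ is the pre-push residue at $v$ in level $i$ for target $t$. The crucial invariant, readable off the reverse-push recursion, is that $r_{v,i,t}$ is upper bounded by $P^i(v,t)$, the $i$-step forward transition probability from $v$ to $t$; summing over $t \in \S$ therefore gives $\sum_t r_{v,i,t} \leq 1$ by conservation of probability. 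Hence $\sum_t T_r(t) \leq \ell m/r_{\max}$, where $m = \sum_v d_{\text{in}}(v)$ is the total edge count, and averaging over $t$ yields $\EE_t[T_r] = O(\ell \dbar/r_{\max})$.

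Combining the two, the expected total cost is $\widetilde{O}(\ell \dbar/r_{\max} + \ell^2 r_{\max}/\delta)$; setting $r_{\max} = \sqrt{\dbar\delta/\ell}$ balances the two terms and yields the claimed $\widetilde{O}(\ell^{3/2}\sqrt{\dbar/\delta})$ bound. The chief obstacle is the conservation step in bounding $T_r$: the forward estimator bound and concentration argument are essentially routine, but the order-wise speedup hinges entirely on pinning down the invariant linking residues to reverse-chain hitting probabilities and carefully accounting for pushes across all $\ell$ levels; absent this structural identity one only recovers the $1/\delta$ rate that the bidirectional method is meant to beat.
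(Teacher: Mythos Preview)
Your proposal is correct and matches the paper's proof of Theorem~\ref{thm:runtime}: same forward/reverse decomposition, same concentration-based bound $T_f=\widetilde{O}(\ell^2\delta_r/\delta)$, same conservation-of-probability idea for bounding the reverse work averaged over targets, and the same balancing of $\delta_r$. The only cosmetic difference is in the reverse-work bookkeeping: the paper converts the inequality $\r_t^k[v]\leq \PR_{\vecE_v}^k[t]$ (obtained from the push invariant) into an indicator count---at most $1/\delta_r$ targets $t$ can have $\PR_{\vecE_v}^k[t]\geq\delta_r$---whereas you amortize the same inequality fractionally via $\mathds{1}_{\text{pushed}}\leq r_{v,i,t}/r_{\max}$ and then sum $\sum_t P^i(v,t)=1$; both routes yield $\EE_t[T_r]=O(\ell\,\dbar/\delta_r)$.
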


Thus, for typical targets, {\em we can estimate transition probabilities of order $\delta$ in time only $O(1/\sqrt{\delta})$}. Note that we do not need for every state that the number of neighboring states is small, but rather, that they are small on average -- for example, this is true in `power-law' networks, where some nodes have very high degree, but the average degree is small. The proof of this result is based on a modification of an argument in~\cite{Lofgren2014} -- refer Section \ref{ssec:analysis} for details. 

Estimating transition probabilities to a target state is one of the fundamental primitives in Markov chain models -- hence, we believe that our algorithm can prove useful in a variety of application domains. In Section \ref{sec:apps}, we briefly describe how to adapt our method for some of these applications -- estimating hitting/return times and stationary probabilities, extensions to non-homogenous Markov chains (in particular, for estimating graph diffusions and heat kernels), connections to local algorithms and expansion testing. In addition, our MSTP-estimator could be useful in several other applications -- estimating ruin probabilities in reliability models, buffer overflows in queueing systems, in statistical physics simulations, etc.

\subsection{Existing Approaches for MSTP-Estimation}
\label{ssec:relwork}

There are two main techniques used for MSTP-estimation. The first is a natural Monte Carlo algorithm: we estimate $\PR_{\vecS}^{\ell}[t]$ by sampling independent $\ell$-step paths, each starting from a random state sampled from $\vecS$. 
A simple concentration argument shows that for a given value of $\delta$, we need $\widetilde{\Theta}(1/\delta)$ samples to get an accurate estimate of $\PR_{\vecS}^{\ell}[t]$, irrespective of the choice of $t$, and the structure of $P$. 
Note that this algorithm is agnostic of the terminal state $t$; it  gives an accurate estimate for any $t$ such that $\PR_{\vecS}^{\ell}[t]>\delta$.

On the other hand, the problem also admits a natural linear algebraic solution, using the standard power iteration starting with $\vecS$, or the reverse power iteration starting with $\vecE_t$ (which is obtained by re-writing Equation \eqref{eq:MSTP} as $\PR_{\vecS}^{\ell}[t]:= \vecS(\vecE_t(P^T)^{\ell})^T$). When the state space is large, performing a direct power iteration is infeasible -- however, there are localized versions of the power iteration that are still efficient. Such algorithms have been developed, among other applications, for PageRank estimation~\cite{Andersen2006,Andersen2007} and for heat kernel estimation~\cite{Kloster2014}. Although slow in the worst case~\footnote{In particular, local power iterations are slow if a state has a very large out-neighborhood (for the forward iteration) or in-neighborhood (for the reverse update).}, such local update algorithms are often fast in practice, as unlike Monte Carlo methods they exploit the local structure of the chain. However even in sparse Markov chains and for a large fraction of target states, their running time can be $\Omega(1/\delta)$. For example, consider a random walk on a random $d$-regular graph and let $\delta= o(1/n)$ -- then for $\ell\sim\log_d(1/\delta)$, verifying $\PR_{\vecE_s}^{\ell}[t]>\delta$ is equivalent to uncovering the entire $\log_d(1/\delta)$ neighborhood of $s$. Since a large random $d$-regular graph is (whp) an expander, this neighborhood has $\Omega(1/\delta)$ distinct nodes. Finally, note that as with Monte Carlo, power iterations can be adapted to either the source or terminal state, but not both.

For \emph{reversible Markov chains}, one can get a bidirectional algorithms for estimating $\PR_{\vecE_s}^{\ell}[t]$ based on colliding random walks. 
For example, consider the problem of estimating length-$2\ell$ random walk transition probabilities in a \emph{regular undirected graph} $G(V,E)$ on $n$ vertices~\cite{Goldreich2011,Kale2008}. 
The main idea is that to test if a random walk goes from $s$ to $t$ in $2\ell$ steps with probability $\geq\delta$, we can generate two independent random walks of length $\ell$, starting from $s$ and $t$ respectively, and detect if they \emph{terminate at the same intermediate node}. 
Suppose $p_w,q_w$ are the probabilities that a length-$\ell$ walk from $s$ and $t$ respectively terminate at node $w$ -- then from the reversibility of the chain, we have that $\PR_{\vecS}^{2\ell}[t]=\sum_{w\in V}p_wq_w$; this is also the collision probability. 
The critical observation is that if we generate $\sqrt{1/\delta}$ walks from $s$ and $t$, then we get $1/\delta$ potential collisions, which is sufficient to detect if $\PR_{\vecS}^{2\ell}[t]>\delta$.
This argument forms the basis of the \emph{birthday-paradox}, and similar techniques used in a variety of estimation problems (eg., see~\cite{Motwani2007}).
Showing concentration for this estimator is tricky as the samples are not independent; moreover, to control the variance of the samples, the algorithms often need to separately deal with `heavy' intermediate nodes, where $p_w$ or $q_w$ are much larger than $O(1/n)$. 
Our proposed approach is much simpler both in terms of algorithm and analysis, and more significantly, it extends beyond reversible chains to any general discrete state-space Markov chain.

The most similar approach to ours is the recent FAST-PPR algorithm of Lofgren et al.~\cite{Lofgren2014} for PageRank estimation; our algorithm borrows several ideas and techniques from that work. However, the FAST-PPR algorithm relies heavily on the structure of PageRank -- in particular, the fact that the PageRank walk has $Geometric(\alpha)$ length (and hence can be stopped and restarted due to the memoryless property).
Our work provides an elegant and powerful generalization of the FAST-PPR algorithm, extending the approach to general Markov chains. 

\section{The Bidirectional MSTP-estimation Algorithm}
\label{sec:FastPPR}

\subsection{Algorithm}
\label{sec:algo}

As described in Section \ref{ssec:results}, given a target state $t$, our bidirectional MSTP algorithm keeps track of a pair of vectors -- the estimate vector $\p_t^k \in \R^n$ and the residual vector $\r_t^k \in \R^n$ -- for each length $k\in\{0,1,2,\ldots,\ell\}$. The vectors are initially all set to $\underline{0}$ (i.e., the all-$0$ vector), except $r_t^0$ which is initialized as $\vecE_t$. Moreover, they are updated using a \emph{reverse push} operation defined as:

\begin{algorithm}[ht]
\caption{REVERSE-PUSH$(v,i)$}
\label{alg:push}
\begin{algorithmic}[1]
\REQUIRE Transition matrix $P$, estimate vector $\p_t^i$, residual vectors $\r_t^i,\r_t^{i+1}$
\RETURN New estimate vectors $\{\widetilde{\p}_t^i\}$ and residual-vectors $\{\widetilde{\r_t^i}\}$ computed as:
\begin{align*}
	\widetilde{\p}_t^i \leftarrow \p_t^i + \langle\r_t^i,\vecE_v\rangle\vecE_v;\quad \quad
	\widetilde{\r}_t^i \leftarrow \r_t^i -\langle\r_t^i,\vecE_v\rangle\vecE_v;\quad \quad
	\widetilde{\r}_t^{i+1} \leftarrow \r_t^{i+1} + \langle\r_t^i,\vecE_v\rangle\left(\vecE_vP^T\right)
\end{align*}	
\end{algorithmic}
\end{algorithm}    

The main observation behind our algorithm is that we can re-write $\PR_{\vecS}^{\ell}[t]$ in terms of $\{\p_t^k,\r_t^k\}$ as an {\em expectation over random sample-paths of the Markov chain} as follows (cf. Equation \eqref{eq:pushinv}): 
\begin{equation}
\label{eq:mstpest}
\PR_{\vecS}^{\ell}[t] = \langle\vecS,\p_t^{\ell}\rangle + \sum_{k=0}^{\ell}\EE_{V_k\sim\vecS P^k}\left[\r_t^{\ell-k}(V_k)\right] 
\end{equation}	
In other words, given vectors $\{\p_t^k,\r_t^k\}$, we can get an unbiased estimator for $\PR_{\vecS}^{\ell}[t]$ by sampling a length-$\ell$ random trajectory $\{V_0,V_1,\ldots,V_{\ell}\}$ of the Markov chain $P$ starting at a random state $V_0$ sampled from the source distribution $\vecS$, and then adding the residuals along the trajectory as in Equation \eqref{eq:mstpest}.
We formalize this bidirectional MSTP algorithm in Algorithm \ref{alg:MSTP}.

\begin{algorithm}[!h]
\caption{Bidirectional-MSTP$(P,\vecS,t,\ell_{\max},\delta)$}
\label{alg:MSTP}
\begin{algorithmic}[1] 
\REQUIRE Transition matrix $P$, source distribution $\vecS$, target state $t$, maximum steps $\ell_{\max}$, minimum probability threshold $\delta$, relative error bound $\epsilon$, failure probability $p_{f}$
\STATE Set accuracy parameter $c$ based on $\epsilon$ and $p_f$ and set reverse threshold $\delta_r$ (cf. Theorems \ref{thm:MSTPmain} and \ref{thm:runtime})  (in our experiments we use $c = 7$ and $\delta_r = \sqrt{\delta/c}$)
\STATE Initialize: Estimate vectors $\p_t^k = \underline{0}\,,\, \fall k\in\{0,1,2,\ldots,\ell\}$,\\
\hspace{1.4cm}  Residual vectors $\r_t^0 = \vecE_t$ and $\r_t^k = \underline{0}\,,\, \fall k\in\{1,2,3,\ldots,\ell\}$
\FOR{$i\in\{0,1,\ldots,\ell_{\max}\}$}
\WHILE{$\exists \, v\in \S\quad s.t.\quad\r_t^i[v]>\delta_r$}
\STATE Execute REVERSE-PUSH$(v,i)$
\ENDWHILE
\ENDFOR
\STATE Set number of sample paths $n_f=c\ell_{\max}\delta_r/\delta$ \quad(See Theorem \ref{thm:MSTPmain} for details)
\FOR{index $i\in \{1,2,\ldots,n_f\}$}
\STATE Sample starting node $V_i^0\sim\vecS$
\STATE Generate sample path $T_i = \{V_i^0,V_i^1,\ldots,V_i^{\ell_{\max}}\}$ of length $\ell_{\max}$ starting from $V_i^0$ 
\STATE For $\ell\in\{1,2,\ldots,\ell_{\max}\}$: sample $k\sim Uniform[0,\ell]$ and compute $S_{t,i}^{\ell}=\ell \r_t^{\ell-k}[V_i^{k}]$ \\(We reinterpret the sum over $k$ in Equation \ref{eq:mstpest} as an expectation and sample $k$ rather sum over $k \leq \ell$  for computational speed.)
\ENDFOR 
\RETURN $\{\widehat{\PR}_{\vecS}^{\ell}[t]\}_{\ell\in[\ell_{\max}]}$, where $\widehat{\PR}_{\vecS}^{\ell}[t]=\langle\vecS,\p_t^{\ell}\rangle + (1/n_f)\sum_{i=1}^{n_f}S_{t,i}^{\ell}$
\end{algorithmic}
\end{algorithm}

\subsection{Some Intuition Behind our Approach}
\label{ssec:intuition}

\begin{figure}[!b]
\begin{center}
\includegraphics[width=0.8\textwidth]{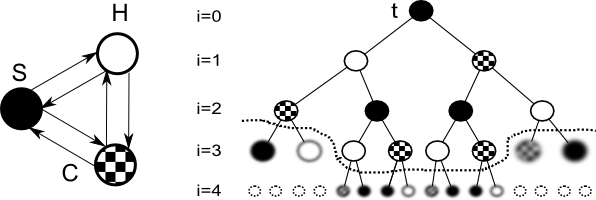}
\end{center}
\caption{Visualizing a sequence of REVERSE-PUSH operations: Given the Markov chain on the left with $S$ as the target, we perform REVERSE-PUSH operations $(S,0),(H,1),(C,1)$,$(S,2)$.
}
\label{fig:example}
\end{figure}

Before formally analyzing the performance of our MSTP-estimation algorithm, we first build some intuition as to why it works. 
In particular, it is useful to interpret the estimates and residues in probabilistic/combinatorial terms. 
In Figure \ref{fig:example}, we have considered a simple Markov chain on three states -- Solid, Hollow and Checkered (henceforth $(S,H,C)$). On the right side, we have illustrated an intermediate stage of reverse work using $S$ as the target, after performing the REVERSE-PUSH operations $(S,0),(H,1),(C,1)$ and $(S,2)$ in that order. Each push at level $i$ uncovers a collection of length-$(i+1)$ paths terminating at $S$ -- for example, in the figure, we have uncovered all length $2$ and $3$ paths, and several length $4$ paths. The crucial observation is that {\em each uncovered path of length $i$ starting from a node $v$ is accounted for in either $\p_v^{i}$ or $\r_v^{i}$}. In particular, in Figure \ref{fig:example}, all paths starting at solid nodes are stored in the estimates of the corresponding states, while those starting at blurred nodes are stored in the residue. Now we can use this set of pre-discovered paths to boost the estimate returned by Monte Carlo trajectories generated starting from the source distribution. The dotted line in the figure represents the current reverse-work \emph{frontier} -- it separates the fully uncovered neighborhood of $(S,0)$ from the remaining states $(v,i)$.

In a sense, what the REVERSE-PUSH operation does is construct a sequence of importance-sampling weights, which can then be used for Monte Carlo. An important novelty here is that the importance-sampling weights are: $(i)$ adapted to the target state, and $(ii)$ dynamically adjusted to ensure the Monte Carlo estimates have low variance. Viewed in this light, it is easy to see how the algorithm can be modified to applications beyond basic MSTP-estimation: for example, to non-homogenous Markov chains, or for estimating the probability of hitting a target state $t$ for the first time in $\ell$ steps (cf. Section \ref{sec:apps}). Essentially, we only need an appropriate reverse-push/dynamic programming update for the quantity of interest (with associated invariant, as in Equation \eqref{eq:mstpest}).  

\subsection{Performance Analysis}
\label{ssec:analysis}

We first formalize the critical invariant introduced in Equation \eqref{eq:mstpest}:
\begin{lemma} \label{lemma:rev_push}
Given a terminal state $t$, suppose we initialize $\p_t^0=\underline{0}, \r_t^0=\vecE_t$ and $\p_t^k,\r_t^k = \underline{0}\,\forall\,k\geq 0$. 
Then for any source distribution $\vecS$ and length $\ell$, after any arbitrary sequence of REVERSE-PUSH$(v,k)$ operations, the vectors $\{\p_t^k,\r_t^k\}$ satisfy the invariant:
\begin{equation}
\label{eq:pushinv}
\PR_{\vecS}^{\ell}[t] = \langle\vecS,\p_t^{\ell}\rangle + \sum_{k=0}^{\ell}\langle\vecS P^k,\r_t^{\ell-k}\rangle	
\end{equation}	
\end{lemma}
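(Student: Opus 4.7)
The plan is to prove the invariant \eqref{eq:pushinv} by induction on the number of REVERSE-PUSH operations that have been executed, with the base case being the initialization. At initialization, $\p_t^\ell = \underline{0}$, so $\langle \vecS, \p_t^\ell\rangle = 0$, and the only non-zero residual in the sum is $\r_t^0 = \vecE_t$, contributing at $k=\ell$. Hence the right-hand side equals $\langle \vecS P^\ell, \vecE_t\rangle = \PR_{\vecS}^\ell[t]$, matching the left-hand side.

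For the inductive step, I would fix a push operation REVERSE-PUSH$(v,i)$, let $\alpha := \langle \r_t^i, \vecE_v\rangle$, and compute the change in each term of the right-hand side of \eqref{eq:pushinv}. Only three quantities move: $\p_t^i$, $\r_t^i$, and $\r_t^{i+1}$. I would split into two cases. If $i = \ell$, then $\p_t^\ell$ increases by $\alpha \vecE_v$ (giving $+\alpha\, \vecS[v]$ in the first term), the $k=0$ term of the sum changes from $\langle \vecS, \r_t^\ell\rangle$ by $-\alpha\, \vecS[v]$, and the change to $\r_t^{\ell+1}$ is invisible to the sum since $k$ only ranges over $0,\ldots,\ell$. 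These two changes cancel, so the invariant is preserved.

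If $i < \ell$, then the $\p_t^\ell$ term is unchanged, and only the summands at $k = \ell - i$ and $k = \ell - i - 1$ are affected. The $k = \ell - i$ term changes by $-\alpha \langle \vecS P^{\ell-i}, \vecE_v\rangle$, while the $k = \ell - i - 1$ term changes by $\alpha \langle \vecS P^{\ell-i-1}, \vecE_v P^T\rangle$. The key algebraic identity $\langle \vecS P^{\ell-i-1}, \vecE_v P^T\rangle = \langle \vecS P^{\ell-i-1} P, \vecE_v\rangle = \langle \vecS P^{\ell-i}, \vecE_v\rangle$ (which is simply the associativity of matrix-vector products together with the transpose relation $\langle x, yP^T\rangle = \langle xP, y\rangle$) shows the two changes cancel exactly. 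This closes the induction.

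The argument is almost entirely bookkeeping; the only mild subtlety is making sure the transpose identity is used correctly and that the edge case $i=\ell$ (where the residual mass pushed to level $\ell+1$ leaves the window) is treated separately. Once this is done, Lemma \ref{lemma:rev_push} follows immediately, since \texttt{Bidirectional-MSTP} only modifies $\{\p_t^k, \r_t^k\}$ via such pushes.
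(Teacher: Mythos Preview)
Your proof is correct and essentially identical to the paper's own argument: both verify the base case at initialization and then show by induction that a single REVERSE-PUSH$(v,i)$ leaves the right-hand side of \eqref{eq:pushinv} unchanged, splitting into the cases $i=\ell$ and $i<\ell$ and using the transpose identity $\langle xP,y\rangle=\langle x,yP^T\rangle$. The only omission is the trivial case $i>\ell$ (where none of $\p_t^\ell,\r_t^0,\ldots,\r_t^\ell$ are touched), which the paper dispatches in one line and which you should state explicitly for completeness.
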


\begin{proof}
The proof follows the outline of a similar result in Andersen et al.~\cite{Andersen2007} for PageRank estimation. First, note that Equation \eqref{eq:pushinv} can be re-written as
$\PR_{\vecS}^{\ell}[t] = \left\langle\vecS,\p_t^{\ell} + \sum_{k=0}^{\ell}\r_t^{\ell-k}(P^k)^T\right\rangle
$.
Note that under the initial conditions specified in Algorithm \ref{alg:MSTP}, for any $\vecS,\ell$, the invariant reduces to:
$\PR_{\vecS}^{\ell}[t] = \left\langle\vecS,\p_t^{\ell} + \sum_{k=0}^{\ell}\r_t^{\ell-k}(P^k)^T\right\rangle	
= \langle\vecS,\vecE_t(P^{T})^{\ell}\rangle
= \langle\vecS P^{\ell},\vecE_t\rangle,
$
which is true by definition. We now prove the invariant by induction. Suppose at some stage, vectors $\{\p_t^k,\r_t^k\}$ satisfy Equation \eqref{eq:pushinv}; to complete the proof, we need to show that for any pair $(v,i)$, the REVERSE-PUSH$(v,i)$ operation preserves the invariant. Let $\{\widetilde{\p}_t^k,\widetilde{\r}_t^k\}$ denote the estimate and residual vectors after the REVERSE-PUSH$(v,i)$ operation is applied, and define:
\begin{equation*}
\Delta_v^i = \left(\widetilde{\p}_t^{\ell} + \sum_{k=0}^{\ell}\widetilde{\r}_t^{\ell-k}(P^k)^T\right) - \left(\p_t^{\ell} + \sum_{k=0}^{\ell}\r_t^{\ell-k}(P^k)^T\right)
\end{equation*}
Now, to prove the invariant, it is sufficient to show $\Delta_v^i=0$ for any choice of $(v,i)$. Clearly this is true if $\ell<i$; this leaves us with two cases:
\begin{itemize}[nosep,leftmargin=*]
\item If $\ell = i$ we have: $\Delta_v^i= \Big(\widetilde{\p_t}^i-\p_t^i\Big)+\Big(\widetilde{\r}_t^{i}-\r_t^{i}\Big)= \langle\r_t^k,\vecE_v\rangle\vecE_v-\langle\r_t^k,\vecE_v\rangle\vecE_v = 0
$
\item If $\ell> i$, we have:
\begin{align*}
\Delta_v^i
&= \Big(\widetilde{\r}_t^{i+1}-\r_t^{i+1}+ (\widetilde{\r}_t^i-\r_t^i)P^T\Big)\Big(P^T\Big)^{\ell-i-1} \\
&= \Big(\langle\r_t^k,\vecE_v\rangle(\vecE_tP^T)- (\langle\r_t^k,\vecE_v\rangle\vecE_v)P^T\Big)\Big(P^T\Big)^{\ell-i-1} = \underline{0}
\end{align*}	
\end{itemize}
This completes the proof of the lemma.
\end{proof}

Using this result, we can now characterize the accuracy of the \texttt{Bidirectional-MSTP} algorithm:

\begin{theorem}
\label{thm:MSTPmain}
We are given any Markov chain $P$, source distribution $\vecS$, terminal state $t$, maximum length $\ell_{\max}$ and also parameters $\delta, p_{f}$ and $\epsilon$ (i.e., the desired threshold, failure probability and relative error). Suppose we choose any reverse threshold $\delta_r>\delta$, and set the number of sample-paths $n_f=c\delta_r/\delta$, where $c=\max\left\{6e/\epsilon^2,1/\ln 2\right\}\ln\left(2\ell_{\max}/p_{f}\right)$.  Then for any length $\ell\leq\ell_{\max}$ with probability at least $1-p_{f}$, the estimate returned by \texttt{Bidirectional-MSTP} satisfies:
\begin{align*}
\left|\widehat{\PR}_{\vecS}^{\ell}[t]-\PR_{\vecS}^{\ell}[t]\right|<\max\left\{\epsilon\PR_{\vecS}^{\ell}[t],\delta\right\}.
\end{align*}
\end{theorem}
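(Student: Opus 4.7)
The plan is to decompose the analysis into three steps: unbiasedness, boundedness of each Monte Carlo sample, and a Chernoff concentration argument with a case split on the magnitude of $\PR_{\vecS}^{\ell}[t]$. First, I would invoke Lemma \ref{lemma:rev_push} with the vectors $\{\p_t^k,\r_t^k\}$ produced at the end of the reverse-push phase. The termination condition of the while loop guarantees the crucial bound $\r_t^k[v]\leq \delta_r$ for every state $v$ and every $k\in\{0,\ldots,\ell_{\max}\}$. Reinterpreting the sum $\sum_{k=0}^{\ell}\langle\vecS P^k,\r_t^{\ell-k}\rangle$ as $\ell\cdot\EE_{k\sim U,\,V\sim\vecS P^k}[\r_t^{\ell-k}(V)]$, the samples $S_{t,i}^{\ell}=\ell\,\r_t^{\ell-k}[V_i^{k}]$ produced in the forward phase are i.i.d.\ with mean exactly $\PR_{\vecS}^{\ell}[t]-\langle\vecS,\p_t^{\ell}\rangle$, so $\widehat{\PR}_{\vecS}^{\ell}[t]$ is an unbiased estimator of $\PR_{\vecS}^{\ell}[t]$.

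Next, the residual bound $\r_t^{\ell-k}[v]\leq\delta_r$ combined with the multiplier $\ell\leq \ell_{\max}$ implies each sample satisfies $S_{t,i}^{\ell}\in[0,\ell_{\max}\delta_r]$. Rescaling by $\ell_{\max}\delta_r$ gives i.i.d.\ variables in $[0,1]$, so I would apply a standard multiplicative Chernoff bound. Let $\mu_\ell=\EE[S_{t,i}^{\ell}]\leq \PR_{\vecS}^{\ell}[t]$ and $B=\ell_{\max}\delta_r$. For fixed $\ell$, Chernoff gives, for any $a\in(0,1]$, $\PP[|\widehat{\PR}_{\vecS}^{\ell}[t]-\PR_{\vecS}^{\ell}[t]|\geq a\mu_\ell]\leq 2\exp(-n_f\mu_\ell a^2/(3B))$. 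Plugging in $n_f=c\ell_{\max}\delta_r/\delta$, the exponent becomes $c\mu_\ell a^2/(3\delta)$, which I will use in two regimes.

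The case split is where the main technical care is needed. In the \emph{large regime}, where $\PR_{\vecS}^{\ell}[t]\geq \delta$, I would distinguish a sub-regime where $\mu_\ell\geq \delta$ (so the Chernoff exponent is at least $c\epsilon^2/3$, and the choice $c\geq(6/\epsilon^2)\ln(2\ell_{\max}/p_f)$ gives failure probability at most $p_f/\ell_{\max}$ at relative accuracy $\epsilon$), from the sub-regime $\mu_\ell<\delta$ where almost all of $\PR_{\vecS}^{\ell}[t]$ is already captured by $\langle\vecS,\p_t^\ell\rangle$ and it suffices to control the Monte Carlo deviation by absolute $\delta$. In the \emph{small regime}, $\PR_{\vecS}^{\ell}[t]<\delta$ forces $\mu_\ell<\delta$, and the target becomes an additive bound of $\delta$; here I would apply the Chernoff bound with $a\mu_\ell$ replaced by the worst case $a=\delta/\mu_\ell$ after noting $n_f\mu_\ell\,a^2/B=\delta a/B\cdot(n_f/B)\cdot\delta$, or more cleanly by invoking the ``second form'' of the Chernoff bound $\PP[\widehat{\PR}-\PR\geq t]\leq \exp(-n_f t^2/(2B(\mu_\ell+t/3)))$, which with $t=\delta$ and $\mu_\ell\leq \delta$ yields the exponent $\Omega(c)$ and the constant $1/\ln 2$ in the definition of $c$ handles this case. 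A union bound over $\ell\in\{1,\ldots,\ell_{\max}\}$ absorbs the $\ell_{\max}$ factor and closes the argument.

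The main obstacle I anticipate is the sub-regime $\mu_\ell\ll \delta$: the multiplicative Chernoff bound degrades because its exponent scales with $\mu_\ell$, so one must either work through the Bernstein-type inequality above or truncate by noting that $\PR_{\vecS}^{\ell}[t]-\mu_\ell=\langle\vecS,\p_t^{\ell}\rangle$ is a deterministic quantity that carries all the ``missing mass''; verifying that the Chernoff exponent is at least $\ln(2\ell_{\max}/p_f)$ uniformly over this sub-regime is the routine but necessary computation that pins down the constants $c=\max\{6e/\epsilon^2,1/\ln 2\}\ln(2\ell_{\max}/p_f)$ and $n_f=c\ell_{\max}\delta_r/\delta$ appearing in the theorem statement.
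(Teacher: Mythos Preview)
Your proposal is correct and follows the same three-step skeleton as the paper (unbiasedness via Lemma~\ref{lemma:rev_push}, the bound $S_{t,i}^{\ell}\le \ell\delta_r$ from the push termination condition, then a Chernoff argument with a case split). Two small differences are worth noting. First, the paper splits cases directly on $\mu_\ell=\EE[S_{t,i}^{\ell}]$ at the threshold $\delta/2e$, rather than on $\PR_{\vecS}^{\ell}[t]$; this avoids your sub-regimes entirely, since in the large-$\mu_\ell$ case the relative bound on $\mu_\ell$ already implies the relative bound on $\PR_{\vecS}^{\ell}[t]\ge\mu_\ell$, and in the small-$\mu_\ell$ case one only needs an additive $\delta$ bound on the Monte Carlo part. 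Second, for the small-$\mu_\ell$ regime the paper does not use Bernstein but rather the specific upper-tail Chernoff form $\PP[X>b]\le 2^{-b}$ valid whenever $b>2e\,\EE[X]$ (applied with $b=c$ to the rescaled sum $X=\sum_i S_{t,i}^{\ell}/(\ell\delta_r)$); this is exactly where the constants $6e/\epsilon^2$ and $1/\ln 2$ in the definition of $c$ come from. Your Bernstein route would also close the argument, but with slightly different constants than those stated.
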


\begin{proof}
Given any Markov chain $P$ and terminal state $t$, note first that for a given length $\ell\leq\ell_{\max}$, Equation \eqref{eq:mstpest} shows that the estimate $\widehat{\PR}_{\vecS}^{\ell}[t]$ is an unbiased estimator.
Now, for any random-trajectory $T_k$, we have that the score $S_{t,k}^{\ell}$ obeys: $(i)\,\,\EE[S_{t,k}^{\ell}]\leq\PR_{\vecS}^{\ell}[t]$ and $(ii)\,\, S_{t,k}^{\ell}\in[0,\ell\delta_r]$; the first inequality again follows from Equation \eqref{eq:mstpest}, while the second follows from the fact that we executed REVERSE-PUSH operations until all residual values were less than $\delta_r$.

Now consider the rescaled random variable $X_k=S_{t,k}^{\ell}/(\ell\delta_r)$ and $X=\sum_{k\in[n_f]}X_k$; then we have that $X_k\in[0,1]$, $\EE[X]\leq (n_f/\ell\delta_r)\PR_{\vecS}^{\ell}[t]$ and also $\left(X-\EE[X]\right)= (n_f/\ell\delta_r)(\widehat{\PR}_{\vecS}^{\ell}[t]-\PR_{\vecS}^{\ell}[t])$. Moreover, using standard Chernoff bounds (cf. Theorem $1.1$ in~\cite{dubhashi2009}), we have that:
\begin{align*}
\PP\left[|X-\EE[X]|>\epsilon\EE[X]\right]&<2\exp{\left(-\frac{\epsilon^2\EE[X]}{3}\right)}\quad\mbox{ and }\quad
\PP[X>b]\leq 2^{-b}\mbox{ for any }b>2e\EE[X]
\end{align*}
Now we consider two cases:
\begin{enumerate}[nosep,leftmargin=*]

\item $\EE[S_{t,k}^{\ell}]>\delta/2e$ (i.e., $\EE[X]>n_f\delta/2e\ell\delta_r=c/2e$): Here, we can use the first concentration bound to get:
\begin{align*}
\PP\left[\left|\widehat{\PR}_{\vecS}^{\ell}[t]-\PR_{\vecS}^{\ell}[t]\right|\geq \epsilon\PR_{\vecS}^{\ell}[t]\right]
&=\PP\left[\left|X-\EE[X]\right|\geq \frac{\epsilon n_f}{\ell\delta_r}\PR_{\vecS}^{\ell}[t]\right]
\leq\PP\left[\left|X-\EE[X]\right|\geq \epsilon \EE[X]\right]\\
&\leq 2\exp{\left(-\frac{\epsilon^2\EE[X]}{3}\right)}
\leq 2\exp{\left(-\frac{\epsilon^2 c}{6e}\right)},
\end{align*}
where we use that $n_f=c\ell_{\max}\delta_r/\delta$ (cf. Algorithm \ref{alg:MSTP}). Moreover, by the union bound, we have:
\begin{align*}
\PP\left[\bigcup_{\ell\leq\ell_{\max}}\left\{\left|\widehat{\PR}_{\vecS}^{\ell}[t]-\PR_{\vecS}^{\ell}[t]\right|\geq \epsilon\PR_{\vecS}^{\ell}[t]\right\}\right]
&\leq 2\ell_{\max}\exp{\left(-\frac{\epsilon^2 c}{32e}\right)},
\end{align*}
Now as long as $c\geq \left(6e/\epsilon^2\right)\ln\left(2\ell_{\max}/p_{f}\right)$, we get the desired failure probability.

\item $\EE[S_{t,k}^{\ell}]<\delta/2e$ (i.e., $\EE[X]<c/2e$): In this case, note first that since $X>0$, we have that $\PR_{\vecS}^{\ell}[t]-\widehat{\PR}_{\vecS}^{\ell}[t]\leq (n_f/\ell\delta_r)\EE[X]\leq \delta/2e<\delta$. On the other hand, we also have:
\begin{align*}
\PP\left[\widehat{\PR}_{\vecS}^{\ell}[t]-\PR_{\vecS}^{\ell}[t]\geq \delta \right]
&=\PP\left[X-\EE[X]\geq \frac{n_f\delta}{\ell\delta_r}\right]
\leq\PP\left[X\geq c\right]\leq 2^{-c},
\end{align*}
where the last inequality follows from our second concentration bound, which holds since we have $c>2e\EE[X]$. Now as before, we can use the union bound to show that the failure probability is bounded by $p_{f}$ as long as $c\geq\log_2\left(\ell_{\max}/p_{f}\right)$. 

\end{enumerate}
Combining the two cases, we see that as long as $c\geq\max\left\{6e/\epsilon^2,1/\ln 2\right\}\ln\left(2\ell_{\max}/p_{f}\right)$, then we have $\PP\left[\bigcup_{\ell\leq\ell_{\max}}\left\{\left|\widehat{\PR}_{\vecS}^{\ell}[t]-\PR_{\vecS}^{\ell}[t]\right|\geq \max\{\delta,\epsilon\PR_{\vecS}^{\ell}[t]\}\right\}\right]
\leq p_{f}$.
\end{proof}

One aspect that is not obvious from the intuition in Section \ref{ssec:intuition} or the accuracy analysis is if using a bidirectional method actually improves the running time of MSTP-estimation. This is addressed by the following result, which shows that for typical targets, our algorithm achieves significant speedup:

\begin{theorem}
\label{thm:runtime}
Let any Markov chain $P$, source distribution $\vecS$, maximum length $\ell_{\max}$ and parameters $\delta, p_{f}$ and $\epsilon$ be given. 
Suppose we set $\delta_r= \sqrt{\frac{\epsilon^2\delta}{\ell_{\max}\log(\ell_{\max}/p_{f})}}$.  Then for a uniform random choice of $t\in\S$, the \texttt{Bidirectional-MSTP} algorithm has a running time of $\widetilde{O}\left(\ell_{\max}^{3/2}\sqrt{\dbar/\delta} \right)$.
\end{theorem}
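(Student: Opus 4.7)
The plan is to split the total cost of \texttt{Bidirectional-MSTP} into forward sampling work and reverse push work, bound each quantity separately (the latter in expectation over a uniform target), and then check that the prescribed choice of $\delta_r$ balances the two up to the polylogarithmic factors hidden by $\widetilde{O}$.

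The forward work is immediate from the algorithm specification: we generate $n_f = c\ell_{\max}\delta_r/\delta$ independent trajectories, each of length $\ell_{\max}$, so the total forward cost is $\widetilde{O}(n_f\ell_{\max}) = \widetilde{O}(\ell_{\max}^2 \delta_r/\delta)$.

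For the reverse work, I would first record two simple structural facts about Algorithm \ref{alg:push}: (a) the residual vectors stay coordinatewise non-negative throughout the algorithm (each REVERSE-PUSH only zeros a positive entry of $\r_t^i$ and adds a non-negative vector to $\r_t^{i+1}$); and (b) a single REVERSE-PUSH$(v,i)$ call takes $O(1+d_v^{in})$ time, where $d_v^{in}$ is the in-degree of $v$, because $\vecE_v P^T$ is supported on $v$'s in-neighbors. Next, since the algorithm only pushes at $(v,i)$ when $\r_t^i[v]>\delta_r$ and each such push deposits that value into the monotonically non-decreasing $\p_t^i[v]$, the total number of pushes ever performed at $(v,i)$ is at most $\pi_t^i[v]/\delta_r$, where $\pi_t^i[v]$ denotes the terminal value of $\p_t^i[v]$. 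Applying the invariant of Lemma \ref{lemma:rev_push} with $\vecS=\vecE_v$ and $\ell=i$, together with non-negativity of the remaining residual contributions, yields the crucial comparison $\pi_t^i[v] \leq P^i[v,t]$.

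Putting these together, the total reverse work is bounded by $O\bigl(\delta_r^{-1} \sum_{i=0}^{\ell_{\max}} \sum_{v \in \S} d_v^{in}\, P^i[v,t]\bigr)$. Now I would take expectation over a uniformly random target $t\in\S$: since $P^i$ is row-stochastic, $\EE_t[P^i[v,t]] = 1/|\S|$, and because $\sum_v d_v^{in} = |\S|\dbar$, the expected reverse work is $\widetilde{O}\bigl(\ell_{\max}/(|\S|\delta_r)\cdot \sum_v d_v^{in}\bigr) = \widetilde{O}(\ell_{\max}\dbar/\delta_r)$. The total expected runtime is therefore $\widetilde{O}(\ell_{\max}^2\delta_r/\delta + \ell_{\max}\dbar/\delta_r)$, and choosing $\delta_r \asymp \sqrt{\dbar\delta/\ell_{\max}}$ (absorbing the $\log(\ell_{\max}/p_f)/\epsilon^2$ factors into $\widetilde{O}$) balances the two terms at the claimed $\widetilde{O}(\ell_{\max}^{3/2}\sqrt{\dbar/\delta})$. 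The most delicate step is the residual-to-true-probability comparison $\pi_t^i[v]\leq P^i[v,t]$; once that is extracted cleanly from Lemma \ref{lemma:rev_push}, the rest is arithmetic combined with the uniform-averaging trick adapted from Lofgren et al.~\cite{Lofgren2014}.
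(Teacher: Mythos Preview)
Your proposal is correct and follows essentially the same route as the paper: split into forward and reverse work, invoke Lemma~\ref{lemma:rev_push} with $\vecS=\vecE_v$ to compare the reverse cost at each $(v,i)$ to $P^i[v,t]$, average over a uniform target using row-stochasticity, and balance. The only cosmetic difference is that the paper bounds the reverse work through the indicator $\mathds{1}_{\{P^i[v,t]\geq\delta_r\}}$ together with the counting bound $\sum_{t}\mathds{1}_{\{P^i[v,t]\geq\delta_r\}}\leq 1/\delta_r$, whereas you use the equivalent fractional bound $\pi_t^i[v]/\delta_r\leq P^i[v,t]/\delta_r$ and sum directly; your remark that the balanced $\delta_r$ should carry a $\sqrt{\dbar}$ factor is in fact a small correction to the value stated in the theorem.
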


\begin{proof}
The runtime of Algorithm \ref{alg:MSTP} consists of two parts:\\
\noindent\textbf{Forward-work} (i.e., for generating trajectories): we generate $n_f=c\ell_{\max}\delta_r/\delta$ sample trajectories, each of length $\ell_{\max}$ -- hence the running time is $O\left(c\delta\ell_{\max}^2/\delta\right)$ for any Markov chain $P$, source distribution $\vecS$ and target node $t$. Substituting for $c$ from Theorem \ref{thm:MSTPmain}, we get that the forward-work running time $T_f = O\left(\frac{\ell_{\max}^2\delta_r\log(\ell_{\max}/p_{f})}{\epsilon^2\delta}\right)$. 
\noindent\textbf{Reverse-work} (i.e., for REVERSE-PUSH operations): Let $T_r$ denote the reverse-work runtime for a \emph{uniform random choice of $t\in\S$}. Then we have:
\begin{align*}
\EE[T_r]&=\frac{1}{|\S|}\sum_{t\in\S}\sum_{k=0}^{\ell_{\max}}\sum_{v\in\S}(d^{in}(v)+1)\mathds{1}_{\left\{\mbox{REVERSE-PUSH}(v,k)\mbox{ is executed}\right\}}
\end{align*}
Now for a given $t\in\S$ and $k\in\{0,1,\ldots,\ell_{\max}\}$, note that the only states $v\in\S$ on which we execute REVERSE-PUSH$(v,k)$ are those with residual $\r_t^k(v)>\delta_r$ -- consequently, for these states, we have that $\p_t^k(v)>\delta_r$, and hence, by Equation \eqref{eq:pushinv}, we have that $\PR_{\vecE_v}^k[t]\geq\delta_r$ (by setting $\vecS=\vecE_v$, i.e., starting from state $v$). Moreover, a REVERSE-PUSH$(v,k)$ operation involves updating the residuals for $d^{in}(v)+1$ states. Note that $\sum_{t\in\S}\PR_{\vecE_v}^k[t]=1$ and hence, via a straightforward counting argument, we have that for any $v\in\S$,  $\sum_{t\in\S}\mathds{1}_{\{\PR_{\vecE_v}^k[t]\geq\delta_r\}}\leq1/\delta_r$. Thus, we have:
\begin{align*}
\EE[T_r]&\leq\frac{1}{|\S|}\sum_{t\in\S}\sum_{k=0}^{\ell_{\max}}\sum_{v\in\S}(d^{in}(v)+1)\mathds{1}_{\{\PR_{\vecE_v}^k[t]\geq\delta_r\}}
=\frac{1}{|\S|}\sum_{v\in\S}\sum_{k=0}^{\ell_{\max}}\sum_{t\in\S}(d^{in}(v)+1)\mathds{1}_{\{\PR_{\vecE_v}^k[t]\geq\delta_r\}}\\
&\leq\frac{1}{|\S|}\sum_{v\in\S}(\ell_{\max}+1)\cdot(d^{in}(v)+1)\frac{1}{\delta_r}
=O\left(\frac{\ell_{\max}}{\delta_r}\cdot\frac{\sum_{v\in\S}d^{in}(v)}{|\S|}\right)
=O\left(\frac{\ell_{\max}\dbar}{\delta_r}\right)
\end{align*}
Finally, we choose $\delta_r= \sqrt{\frac{\epsilon^2\delta}{\ell_{\max}\log(\ell_{\max}/p_{f})}}$ to balance $T_f$ and $T_r$ and get the result. 
\end{proof}

\section{Applications of MSTP estimation}
\label{sec:apps}

\begin{itemize}[leftmargin=*]
\item\textbf{Estimating the Stationary Distribution and Hitting Probabilities:} MSTP-estimation can be used in two ways to estimate stationary probabilities $\pi[t]$. First, if we know the mixing time $\tau_{mix}$ of the chain $P$, we can directly use Algorithm \ref{alg:MSTP} to approximate $\pi[t]$ by setting $\ell_{\max}=\tau_{mix}$ and using any source distribution $\vecS$. Theorem \ref{thm:runtime} then guarantees that we can estimate a stationary probability of order $\delta$ in time $O(\tau_{mix}^{3/2}\sqrt{\overline{d}/\delta})$. In comparison, Monte Carlo has $O(\tau_{mix}/\delta)$ runtime. We note that in practice, we usually do not know the mixing time -- in such a setting, our algorithm can be used to compute an estimate of $\PR_{\vecS}^{\ell}[t]$  for all values of $\ell\leq \ell_{\max}$.

An alternative is to modify Algorithm \ref{alg:MSTP} to estimate the \emph{truncated hitting time} $\widehat{p}_{\vecS}^{\ell,hit}[t]$(i.e., the probability of hitting $t$ starting from $\vecS$ for the first time in $\ell$ steps). By setting $\vecS=\vecE_t$, we get an estimate for the expected {\em truncated return time}  $\EE[T_t\mathds{1}_{\{T_t\leq\ell_{\max}\}}] = \sum_{\ell\leq\ell_{\max}}\ell\widehat{p}_{\vecE_t}^{\ell,hit}[t]$. Now, using that fact that $\pi[t]=1/\EE[T_t]$, we can get a lower bound for $\pi[t]$ which converges to $\pi[t]$ as $\ell_{\max}\rightarrow\infty$. We note also that the truncated hitting time has been shown to be useful in other applications such as identifying similar documents on a document-word-author graph \cite{sarkar2008fast}.

To estimate the truncated hitting time, we modify Algorithm \ref{alg:MSTP} as follows: at each stage $i\in\{1,2,\ldots,\ell_{\max}\}$ (note: not $i=0$), instead of REVERSE-PUSH$(t,i)$, we update $\widetilde{\p}_t^i[t]=\p_t^i[t] + \r_t^i[t]$, set $\widetilde{\r}_t^i[t]=0$ {\em and do not push back $\r_t^i[t]$ to the in-neighbors of $t$ in the $(i+1)^{th}$ stage}. The remaining algorithm remains the same. It is easy to see from the discussion in Section \ref{ssec:intuition} that the resulting quantity $\widehat{p}_{\vecS}^{\ell,hit}[t]$ is an unbiased estimate of $\PP[\mbox{Hitting time of }t = \ell|X_0\sim\vecS]$ -- we omit a formal proof due to lack of space.

\item \textbf{Exact Stationary Probabilities in Strong Doeblin chains}: A strong Doeblin chain~\cite{Doeblin1940} is obtained by mixing a Markov chain $P$ and a distribution $\sigma$ as follows: at each transition, the process proceeds according to $P$ with probability $\alpha$, else samples a state from $\sigma$. Doeblin chains are widely used in ML applications -- special cases include the celebrated PageRank metric~\cite{Page1999}, variants such as HITS and SALSA~\cite{Lempel2000}, and other algorithms for applications such as ranking~\cite{Negahban2012} and structured prediction~\cite{Steinhardt2015}. An important property of these chains is that if we sample a starting node $V_0$ from $\sigma$ and sample a trajectory of length $Geometric(\alpha)$ starting from $V_0$, then {\em the terminal node is an unbiased sample from the stationary distribution}~\cite{Athreya2003}. 
There are two ways in which our algorithm can be used for this purpose: one is to replace the REVERSE-PUSH algorithm with a corresponding local update algorithm for the strong Doeblin chain (similar to the one in Andersen et al.~\cite{Andersen2007} for PageRank), and then sample random trajectories of length $Geometric(\alpha)$. 
A more direct technique is to choose some $\ell_{\max}>>1/\alpha$, estimate $\{\PR_{\vecS}^{\ell}[t]\}\,\fall\ell\in[\ell_{\max}]$ and then directly compute the stationary distribution as $\PR[t]=\sum_{\ell=1}^{\ell_{\max}}\alpha^{\ell-1}(1-\alpha)\PR_{\vecS}^{\ell}[t]$. 

\item\textbf{Graph Diffusions:}
If we assign a weight $\alpha_i$ to random walks of length $i$ on a (weighted) graph, the resulting scoring functions
$f(P,\vecS)[t]:=\sum_{i=0}^{\infty}\alpha_i\left(\vecS^TP^{i}\right)[t]$ are known as a \emph{graph diffusions}~\cite{Chung2007} and are used in a variety of applications. The case where $\alpha_i=\alpha^{i-1}(1-\alpha)$ corresponds to PageRank. If instead the length is drawn according to a Poisson distribution (i.e., $\alpha_i=e^{-\alpha}\alpha^i/i!$), then the resulting function is called the \emph{heat-kernel} $h(G,\alpha)$ -- this too has several applications, including finding communities (clusters) in large networks~\cite{Kloster2014}. Note that for any function $f$ as defined above, the truncated sum $f^{\ell_{\max}}=\sum_{i=0}^{\ell_{\max}}\alpha_i\left(\PR_{\vecS}^TP^{i}\right)$ obeys $||f-f^{\ell_{\max}}||_{\infty}\leq\sum_{\ell_{\max}+1}^{\infty}\alpha_i$.
Thus a guarantee on an estimate for the truncated sum directly translates to a guarantee on the estimate for the diffusion.  We can use MSTP-estimation to efficiently estimate these truncated sums.
We perform numerical experiments on heat kernel estimation in the next section.

\item\textbf{Conductance Testing in Graphs:} MSTP-estimation is an essential primitive for conductance testing in large Markov chains~\cite{Goldreich2011}. In particular, in regular undirected graphs, Kale et al~\cite{Kale2008} develop a sublinear bidirectional estimator based on counting collisions between walks in order to identify `weak' nodes -- those which belong to sets with small conductance. Our algorithm can be used to extend this process to any graph, including weighted and directed graphs.

\item\textbf{Local Algorithms:} There is a lot of interest recently on {\em local algorithms} -- those which perform computations given only a small neighborhood of a source node~\cite{Lee2013}. In this regard, we note that \texttt{Bidirectional-MSTP} gives a natural local algorithm for MSTP estimation, and thus for the applications mentioned above -- given a $k$-hop neighborhood around the source and target, we can perform \texttt{Bidirectional-MSTP} with $\ell_{\max}$ set to $k$. The proof of this follows from the fact that the invariant in Equation \eqref{eq:mstpest} holds after any sequence of REVERSE-PUSH operations. 

\end{itemize}

\section{Experiments}
\label{sec:sims}

To demonstrate the efficiency of our algorithm on large Markov chains, we use \emph{heat kernel estimation} (cf. Section \ref{sec:apps}) as an example application. The heat kernel is a non-homogenous Markov chain, defined as the probability of stopping at the target on a random walk from the source, where the walk length is sampled from a $Poisson(\ell)$ distribution. It is important in practice as it has empirically been shown to detect communities well, in the sense that the heat kernel from $s$ to $t$ tends to be larger for nodes $t$ sharing a community with $s$ than for other nodes $t$.  Note that for finding complete communities around $s$, previous algorithms which work from a single $s$ to many targets are more suitable.  Our algorithm is relevant for estimating the heat kernel from $s$ to a small number of targets.  For example, in the application of personalized search on a social network, if user $s$ is searching for nodes $t$ satisfying some query, we might find the top-$k$ results using conventional information retrieval methods for some small $k$, then compute the heat kernel to each of these $k$ targets, to allow us to place results $t$ sharing an inferred community with $s$ above less relevant results.  A related example is if a social network user $s$ is viewing a list of nodes $t$, say the set of users attending some event, our estimator can be used to rank the users $t$ in that list in decreasing order of heat kernel.  This can help $s$ quickly find the users $t$ closest to them in the network.

We use our estimator to compute heat kernels on diverse, real-world graphs that range from millions to billions of edges as described in Table \ref{table:graphs}.  
\begin{table}
\centering
\caption{Datasets used in experiments}
\label{table:graphs}
\begin{tabular}{|c|c|c|c|} \hline
Dataset & Type & \# Nodes & \# Edges\\ \hline
Pokec & directed & 1.6M & 30.6M\\ \hline
LiveJournal & undirected & 4.8M & 69M\\ \hline
Orkut & undirected & 3.1M & 117M\\ \hline
Twitter-2010 & directed & 42M & 1.5B\\ \hline
\end{tabular}
\end{table}
For each graph, for random (source, target) pairs, we compute the heat kernel.  We set average walk-length $\ell=5$ since for larger values of $\ell$, the walk mixes into the stationary distribution, ``forgetting'' where it started.   We set a maximum length of 10 standard deviations above the expectation $\ell + 10 \sqrt{\ell} \approx 27$; the probability of a walk being longer than this is $10^{-12}$, so we can ignore that event.

We compare Algorithm \ref{alg:MSTP} to two state-of-the-art algorithms: The natural Monte Carlo algorithm, and the push forward algorithm introduced in \cite{Kloster2014}.  All three have parameters which allow them to trade off speed and accuracy.  For a fair comparison, we choose parameters such that the mean relative error of all three algorithms is around 10\%, and for those parameters we measure the mean running time of all three algorithms.  We implemented all three algorithms in Scala (for the push forward algorithm, our implementation follows the code linked from \cite{Kloster2014}).

Our results are presented in Figure \ref{fig:runtime}.
 We find that on these four graphs, our algorithm is 100x faster (per $(s, t)$ pair) than these state-of-the-art algorithms.  For example, on the Twitter graph, our algorithm can estimate a heat kernel score is $0.1$ seconds, while the state-of-the-art algorithms both take more than 4 minutes. We note though that the state-of-the-art algorithms were designed for community detection, and return scores from the source to all targets, rather than just one target. Our algorithm's advantage applies in applications where we want the score from a source to a single target or small set of targets.  

The Pokec \cite{takac2012data}, Live Journal \cite{mislove-2007-socialnetworks}, and Orkut \cite{mislove-2007-socialnetworks} datasets were downloaded from the Stanford SNAP project \cite{SnapProject}. The  
Twitter-2010 \cite{BRSLLP}
was downloaded from the Laboratory for Web Algorithmics \cite{WebAlgorithmics}.  For reproducibility, the source code of our experiments are available on our website\footnote{\url{http://cs.stanford.edu/~plofgren}}.

\begin{figure}
\begin{center}
\includegraphics[width=\textwidth]{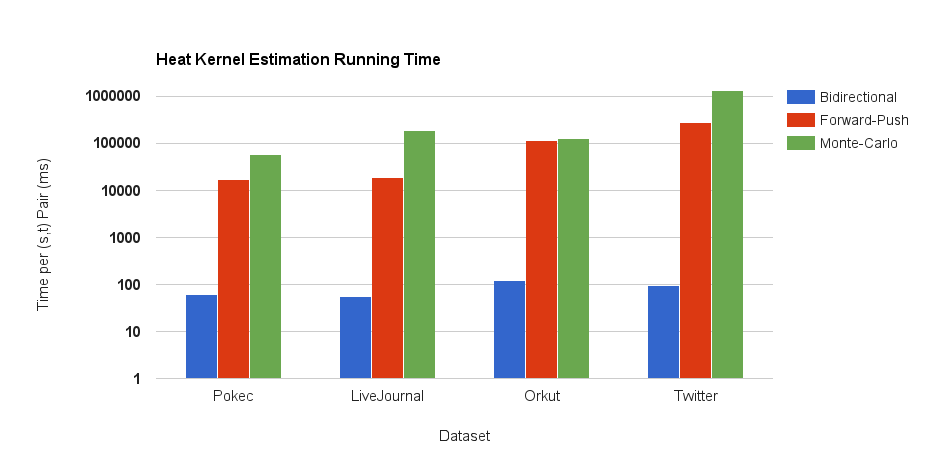}
\end{center}
\label{fig:runtime}
\caption{Estimating heat kernels: Bidirectional MSTP-estimation vs. Monte Carlo, Forward Push. For a fair comparison of running time, we choose parameters such that the mean relative error of all three algorithms is around 10\%.   Notice that our algorithm is 100 times faster per (source, target) pair than state-of-the-art algorithms.
}
\end{figure}

\subsubsection*{Acknowledgments}
Research supported by the DARPA GRAPHS program via grant
FA9550-12-1-0411, and by NSF grant 1447697. Peter Lofgren was supported by an NPSC fellowship. Thanks to Ashish Goel and other members of the Social Algorithms Lab at Stanford for many helpful discussions.

\vfill

\pagebreak

{\small
\bibliographystyle{unsrt}
\bibliography{PPR-refs}
}

\end{document}